\begin{document}
\title{Quantum Interior Point Methods: A Review of Developments and An Optimally Scaling Framework}
\titlerunning{Quantum Interior Point Methods}
\author{Mohammadhossein Mohammadisiahroudi\inst{1,2,3}\orcidID{000-0002-4046-0672} \and
Zeguan Wu\inst{1,4}\orcidID{0000-0002-5695-7579} \and
Pouya Sampourmahani\inst{1}\orcidID{0000-0002-2292-551X}\and
Adrian Harkness\inst{1}\orcidID{0009-0001-5518-6442}\and
Tam\'as Terlaky\inst{1}\orcidID{0000-0003-1953-1971}}
\authorrunning{Mohammadisiahroudi et al.}
\institute{Industrial and Systems Engineering, Lehigh University, Bethlehem, PA 18015, USA
\and
Mathematics and Statistics, University of Maryland, Baltimore County, MD 21250, USA 
\and
Quantum Science Institute, University of Maryland, Baltimore County, MD 21250, USA 
\and 
Computer Science, University of Pittsburgh, Pittsburgh, PA 15260, USA}
\maketitle             
\begin{abstract}
The growing demand for solving large-scale, data-intensive linear and conic optimization problems, particularly in applications such as artificial intelligence and machine learning, has highlighted the limitations of classical interior point methods (IPMs). Despite their favorable polynomial-time convergence, conventional IPMs often suffer from high per-iteration computational costs, especially for dense problem instances. Recent advances in quantum computing, particularly quantum linear system solvers, offer promising avenues to accelerate the most computationally intensive steps of IPMs. However, practical challenges such as quantum error, hardware noise, and sensitivity to poorly conditioned systems remain significant obstacles. In response, a series of Quantum IPMs (QIPMs) has been developed to address these challenges, incorporating techniques such as feasibility maintenance, iterative refinement, and preconditioning. In this work, we review this line of research with a focus on our recent contributions, including an almost-exact QIPM framework. This hybrid quantum-classical approach constructs and solves the Newton system entirely on a quantum computer, while performing solution updates classically. Crucially, all matrix-vector operations are executed on quantum hardware, enabling the method to achieve an optimal worst-case scalability w.r.t dimension, surpassing the scalability of existing classical and quantum IPMs.

\keywords{Quantum Interior Point Method \and Quantum Linear System Algorithm \and Iterative Refinement \and Preconditioning \and Linear Optimization \and Conic Optimization.}
\end{abstract}
\section{Introduction}
In this paper, we review recent advances in Quantum Interior Point Methods (QIPMs) for linear optimization (LO) problems. The standard form LO problem is  minimizing a linear objective function over a polyhedron, formally defined as
\begin{equation} \label{eq:primall problem}\tag{P}
    \begin{aligned}
    \min_{x\in \mathbb{R}^{n}}\  c^T&x \\
    \text{s.t. }
    Ax &= b, \\
    x &\geq 0,
    \end{aligned}
\end{equation}
where $A\in \mathbb{R}^{m\times n}$ with rank$(A)=m$, $b\in\mathbb{R}^m$, and $c\in\mathbb{R}^n$. It is well-known that there is a dual problem associated with the primal problem, as
\begin{equation} \label{eq:dual problem}\tag{D}
    \begin{aligned}
    \max_{(y,s)\in \mathbb{R}^{m}\times\mathbb{R}^{n}} \  b^Ty\ \ & \\
    \text{s.t. }
    A^Ty +&s = c,\\
    &s \geq 0.
    \end{aligned}
\end{equation}
By the strong duality theorem \cite{roos2005interior}, all optimal solutions, if they exist, belong to the set $\mathcal{PD}^*$, which is defined as
\begin{align*}
\mathcal{PD}^*=&\{(x,y,s)\in\mathbb{R}^{n+m+n}:\ Ax=b,\ A^Ty+s=c,\\ & x^Ts=0, \ (x,s)\geq0 \}.
\end{align*} 

Linear optimization plays a foundational role in a broad range of fields, including machine learning, operations research, logistics, and finance. Historically, the Simplex algorithm \cite{bertsimas1997introduction} was among the first prominent methods to solve LO problems. While highly effective in many practical instances, Simplex methods can exhibit exponential-time behavior in the worst case \cite{klee1972good}. In contrast, the introduction of Interior Point Methods (IPMs) marked a major breakthrough in optimization. Starting with Karmarkar's projective algorithm \cite{Karmarkar1984_New}, IPMs have evolved into the most theoretically efficient class of algorithms for solving LO problems, offering polynomial-time complexity with robust convergence guarantees.
Using the standard form of LO problems, the set of feasible primal-dual solutions is defined as
\begin{equation*}
\mathcal{PD}=\left\{(x,y,s)\in\mathbb{R}^{n} \times \mathbb{R}^m\times\mathbb{R}^n |\ Ax=b,\ A^Ty+s=c, \ (x,s)\geq0\right\}.
\end{equation*}
Then, the set of all feasible interior solutions is
\begin{equation*}
\mathcal{PD}^0=\left\{(x,y,s)\in\mathcal{PD}\ |\ (x,s)>0\right\}.
\end{equation*}
By the Strong Duality theorem, all optimal solutions, if there exist any, belong to the set $\mathcal{PD}^*$ defined as
\begin{equation*}
\mathcal{PD}^*=\left\{(x,y,s)\in\mathcal{PD} \ | \ x^Ts=0\right\}.
\end{equation*}
Let $\zeta \geq 0$, then the set of $\zeta$-optimal solutions can be defined as
\begin{equation*}
\mathcal{PD}_{\zeta} = \left\{(x,y,s)\in\mathcal{PD} \ \Big|\ \frac{x^Ts}{n}\le \zeta \right\}.
\end{equation*}

Modern IPMs exploit the geometry of the central path, an analytic trajectory defined by a set of perturbed optimality conditions, which guides iterates toward the optimal solution \cite{roos2005interior, nesterov1994interior}. When initialized appropriately, IPMs follow this path using Newton's method, requiring at most $\mathcal{O}(\sqrt{n}\log(1/\epsilon))$ iterations to obtain an $\epsilon$-approximate solution \cite{roos2005interior}. However, a significant computational bottleneck in IPMs lies in solving the Newton linear system at each iteration. Classical direct methods such as Cholesky factorization incur $\mathcal{O}(n^3)$ complexity, which becomes intractable for large-scale, dense problems. Iterative methods, including conjugate gradient (CG) solvers \cite{al2009convergence, Monteiro2003_Convergence}, mitigate this challenge with lower per-iteration costs but at the expense of increased sensitivity to matrix conditioning and convergence accuracy.

To further enhance the scalability of IPMs, several improvements have been introduced. These include partial update techniques and low-rank updates, which reduce the cost of computing Newton directions and yield the best-known classical total complexity of $\mathcal{O}(n^3 L)$ for LO problems \cite{roos2005interior}. More recently, the incorporation of advanced tools such as fast matrix multiplication, spectral sparsification, and stochastic methods have pushed the complexity to $\mathcal{O}(n^{\omega}\log(n/\epsilon))$, where $\omega < 2.3729$ is the matrix multiplication exponent \cite{brand2020, cohen2021solving, lee2015efficient}. Alternatively, first-order methods like the primal-dual hybrid gradient (PDHG) algorithm have demonstrated empirical success in solving large-scale LO problems, although they lack rigorous theoretical complexity bounds \cite{chambolle2011first, applegate2021practical}.

Alongside these classical advances, quantum computing has emerged as a powerful paradigm capable of accelerating various computational tasks. Quantum algorithms such as Shor's factoring algorithm \cite{shor1994algorithms} and Grover's search algorithm \cite{grover1996fast} have showcased the potential of quantum computers to achieve polynomial or even exponential speedups. Of particular interest for optimization is the class of quantum linear system algorithms (QLSAs), pioneered by the Harrow-Hassidim-Lloyd (HHL) algorithm \cite{harrow2009quantum}. HHL and its successors \cite{childs2017quantum, ambainis2012variable, wossnig2018Quantum} solve sparse quantum linear systems with exponential speedups under certain assumptions, although they exhibit unfavorable dependence on condition number, sparsity, and required precision.

Motivated by the capabilities of quantum computing, researchers have sought to integrate quantum solvers into classical optimization frameworks. This effort has led to the development of Quantum Interior Point Methods (QIPMs), which aim to exploit quantum acceleration in solving the Newton systems arising in IPMs. Notable contributions include quantum subroutines for the Simplex method \cite{nannicini2024fast}, QAOA for binary optimization \cite{farhi2014quantum}, and quantum multiplicative weight update methods for semidefinite optimization \cite{brandao2019quantum, van2018improvements}. For linear and semidefinite programming, QIPMs have demonstrated potential polynomial speedups in terms of problem dimension \cite{kerenidis2021quantum, augustino2023quantum}. However, these early QIPMs faced substantial challenges. The hybrid nature of QIPMs necessitates the extraction of classical information from quantum states at each iteration, typically via quantum tomography algorithms (QTAs). These steps often introduce significant errors and computational overhead, diminishing the overall efficiency of the method.

To overcome these limitations, a series of research efforts has led to the development of improved QIPMs. By incorporating iterative refinement and preconditioning techniques, recent frameworks reduce the impact of quantum errors and ill-conditioning, achieving exponential improvements with respect to precision and condition number compared to earlier quantum methods \cite{mohammadisiahroudi2024efficient, mohammadisiahroudi2025improvements, wu2023inexact,mohammadisiahroudi2025quantum}. For instance, Wu et al. \cite{wu2024quantum} introduced a dual logarithmic barrier-based QIPM with improved iteration complexity and memory access efficiency via QRAM.

In this work, we review some of these advancements with a focus on the novel, almost-exact QIPM framework that achieves provable quantum advantage. In our proposed approach, the Newton system is both constructed and solved entirely on a quantum computer, while classical computation is reserved only for solution updates. All matrix-vector products, the most expensive components in classical QIPMs, are offloaded to quantum hardware, reducing total runtime. Our hybrid quantum-classical framework achieves optimal worst-case scaling of $\mathcal{O}(n^2)$ for fully dense linear optimization problems, outperforming both classical IPMs and existing QIPMs in terms of dimensional complexity.

This framework supports inexact quantum operations, such as quantum matrix inversion and matrix-vector/matrix-matrix multiplication, through the use of iterative refinement. Crucially, unlike prior QIPMs, our method eliminates all classical matrix operations, resulting in a total classical arithmetic cost of $\mathcal{O}(n^2 \log(1/\epsilon))$. This asymptotically improves upon previous QIPMs by a factor of $\mathcal{O}(\sqrt{n})$ and offers a provable quantum speedup, as any classical analog would require at least $\mathcal{O}(n^{2.5})$ operations.

The structure of the paper is as follows. In Section~\ref{sec: feasible}, we discuss how novel reformulations aid in maintaining feasibility and achieving the best-known iteration complexity for QIPMs. Section~\ref{sec: IRPRE} reviews how iterative refinement and preconditioning techniques can mitigate the effects of ill-conditioning and enhance the precision of QIPMs. In Section~\ref{sec: QLSA}, we review recent advancements in Quantum Linear System Algorithms (QLSAs) and Quantum Tomography. Section~\ref{sec: ALmost} presents the state-of-the-art QIPM based on a novel Almost-Exact IPM framework that achieves optimal scaling. In Section~\ref{sec: app}, we explore the applications and implications of recent QIPM advancements in artificial intelligence and machine learning. Finally, Section~\ref{sec: con} concludes the paper and outlines directions for future work.

\section{Inexact Feasible QIPMs}\label{sec: feasible}
In the general scheme of IPMs, we apply the Newton method to the perturbed optimality conditions iteratively to approach an optimal solution by tracing the so-called central path. There are three reformulations of Newton systems to calculate the Newton direction at each step of IPMs in the classical IPM literature. The prevailing system is the Normal Equation System (NES), defined as
\[
A D^2 A^T \Delta y = A x - \beta \mu A S^{-1} e,
\]
where $\beta<1$, $A \in \mathbb{R}^{m \times n}$ is the constraint matrix, $S=\text{diag}(s)$, $$D = \text{diag}(x)^{1/2} \text{diag}(s)^{-1/2}$$ is the diagonal scaling matrix, and $\mu = \frac{x^T s}{n}$ is the complementarity measure. 

One major issue is that an inexact solution to any traditional Newton system calculated by a QLSA+QTA subroutine may lead to infeasibility. To properly address this infeasibility, inexact infeasible QIPM (II-QIPM) \cite{mohammadisiahroudi2024efficient} has been developed, which has $\Ocal(n^2\log(\frac{1}{\epsilon}))$ iteration complexity, where $n$ is the number of variables and $\epsilon$ is the target optimality gap. 

To improve this complexity, we propose two inexact feasible QIPMs (IF-QIPMs) using two novel reformulations of Newton systems. First, we use a basis for the null-space of $A$, stored in columns of the matrix $V$, to reformulate the Newton system in the Orthogonal Subspaces system (OSS) \cite{mohammadisiahroudi2023inexact} as
\begin{equation}\label{eq: OSS}\tag{OSS}
  \begin{bmatrix}
-XA^T&SV
\end{bmatrix}\begin{bmatrix}
\Delta y\\
\lambda
\end{bmatrix}=\beta \mu e- Xs.
\end{equation}
We prove that the inexact solution for the OSS system provides a feasible Newton direction.

In another paper, we propose another system that is a modified version of the NES, and it is more adaptable for quantum singular value transformation \cite{mohammadisiahroudi2025improvements}. We prove the iteration complexity for both IF-QIPMs is $\Ocal(\sqrt{n}\log(\frac{1}{\epsilon}))$, which leads to considerable polynomial speed-up in the complexity of QIPMs. 

\section{Iterative Refinement and Preconditioning}\label{sec: IRPRE}
Another challenge in QIPMs is that their complexity has polynomial dependence on $\frac{1}{\epsilon}$ because of the QTA's overhead. This means previous QIPMs are not polynomial time algorithms as one needs to reach $\frac{1}{\epsilon}=\Ocal(2^{L})$ to find an exact optimal solution for an LO problem, where $L$ is the binary length of input data. We use an iterative refinement technique to address this issue \cite{mohammadisiahroudi2025improvements,mohammadisiahroudi2023inexact}.

Iterative refinement has been widely used in classical numerical algorithms to improve the accuracy of solutions to linear systems. We adapt this technique to iteratively use limited-precision IF-QIPM to obtain a higher-precision solution. We prove that iteratively refined IF-QIPMs (IR-IF-QIPMs) have exponentially improved complexity with respect to precision compared to previous QIPMs.

The last challenge in QIPMs is that QLSAs are sensitive to the condition number of linear systems arising in QIPMs, and Newton systems are usually ill-conditioned. There are two major sources of ill-conditioning in QIPMs. First, for degenerate LO problems, the sequence of coefficient matrices of Newton systems converges to a singular matrix. i.e., their condition number grows to infinity. We show that a properly adapted iterative refinement technique helps with issues, as we stop QIPMs early when the condition number is comparatively small enough. Another source of ill-conditioning is the ill-conditioned input matrix $A$. We address this issue by preconditioning the Newton system. In addition, we show how this particular preconditioner can be applied on a quantum machine without excessive cost \cite{mohammadisiahroudi2025improvements}.

\section{Improved QLSA+QTA Subroutine for QIPMs}\label{sec: QLSA}
The idea of using iterative refinement for quantum algorithms is further used to develop an improved QLSA+QTA subroutine for QIPMs \cite{mohammadisiahroudi2024quantum}.
The most efficient QLSA to solve a linear system of the form $Mz = \sigma$ with $\widetilde{\Ocal}_{\frac{p}{\epsilon}}(\kappa\|M\|_F)$ queries\footnote{The $\widetilde{\Ocal}_{\alpha, \beta} \left( g(x) \right)$ notation indicates that quantities polylogarithmic in $\alpha, \beta$ and $g(x)$ are suppressed.} to QRAM \cite{chakraborty2018power}, where $p$ is the system dimension, representing an exponential speedup over classical algorithms.
A major hurdle lies in the fact that QLSAs solve Quantum Linear System Problems (QLSPs) and so the result is a quantum state, which deviates from the classical definition of the solution of LSPs. Consequently, a Quantum Tomography Algorithm (QTA) is essential to extract a classical solution. The best time complexity of QTA is $\Ocal(\frac{p\varrho}{\epsilon})$, where $\varrho$ represents the upper bound on the norm of the solution.

The overall complexity of QLSA and QTA combined is $\widetilde{\Ocal}_{\frac{p}{\epsilon}}(\frac{p\kappa^2\|\sigma\|}{\epsilon})$. In comparison to the conjugate gradient method (CGM), its query complexity exhibits a better dependence on sparsity, with unfavorable dependence on precision and condition number. An iterative classical-quantum linear system algorithm (ICQLSA) has been proposed, which exponentially improves the time complexity of Quantum Linear Solvers, providing a classical solution with high precision up to $\widetilde{\Ocal}_{\frac{p|\sigma|}{\epsilon}}(p\kappa^2)$ queries to QRAM \cite{mohammadisiahroudi2024quantum}.

This new advancement enables us to do the calculations in high precision settings where $\epsilon=2^{-2L}$, which is almost exact for the solution of the LO problems. Thus, we can also do matrix-vector multiplications on the quantum machine. Using ICQLSA and Quantum mat-vec product within the IR-IF-QIPM leads to optimal scaling $\Ocal(n^2\kappa_A L)$ as the worst case complexity for solving LO can not have better dimension dependence than quadratic, as storing and reading dense matrix $A$ needs $\Ocal(n^2)$ arithmetic operations.

\section{The state-of-the-art QIPM}\label{sec: ALmost}
In this section, we develop an almost exact quantum interior point method for solving linear optimization problems. Assuming that the input data is all integers, we denote the binary length of the input data by 
\begin{align*}
    L&=mn+m+n+\sum_{i,j}\lceil\log_2(|a_{ij}|+1)\rceil\\
&+\sum_{i}\lceil\log_2(|c_{i}|+1)\rceil+\sum_{j}\lceil\log_2(|b_{j}|+1)\rceil,
\end{align*}
where $a_{ij}$ represents the $ij$-element of matrix $A$. The optimal partition is also defined as 
\begin{align*}
    \Bcal&=\{j\in\{1,\dots,n\}:x^*_j>0 \text{ for some }(x^*,y^*, s^*)\in \mathcal{PD}^*\},\\
    \Ncal&=\{j\in\{1,\dots,n\}:s^*_j>0 \text{ for some }(x^*,y^*, s^*)\in \mathcal{PD}^*\}.
\end{align*}
The following lemma is a classical result first proved by \cite{Khachiyan1980}.
\begin{lemma}\label{lemma: L bound}
Let $(x^*,y^*,s^*)\in \mathcal{PD}^*$ be a basic solution. If $x_i^*>0$, then we have $x_i^*\geq 2^{-L}$. If $s_i^*>0$, then we have $s_i^*\geq 2^{-L}$.
\end{lemma}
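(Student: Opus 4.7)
The plan is to use Cramer's rule together with Hadamard's inequality, applied separately to the primal and dual parts of the basic solution. Recall that because $(x^*,y^*,s^*)$ is a basic solution, there exists an index set $B \subseteq \{1,\dots,n\}$ with $|B|=m$ such that the submatrix $A_B \in \mathbb{R}^{m\times m}$ is nonsingular, $x^*_j = 0$ for $j \notin B$, $y^* = (A_B^T)^{-1} c_B$, and $s^*_j = 0$ for $j \in B$.

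First I would treat the primal bound. From $A_B x_B^* = b$ and Cramer's rule, each component satisfies
\begin{equation*}
x_i^* \;=\; \frac{\det(\tilde{A}_B^{(i)})}{\det(A_B)},
\end{equation*}
where $\tilde{A}_B^{(i)}$ is $A_B$ with its $i$-th column replaced by $b$. Since the input is integer, the numerator is an integer, and under the hypothesis $x_i^* > 0$ it is nonzero, hence at least $1$ in absolute value. It therefore suffices to upper bound $|\det(A_B)|$. By Hadamard's inequality,
\begin{equation*}
|\det(A_B)| \;\le\; \prod_{j\in B}\Bigl(\sum_{i=1}^{m}a_{ij}^{2}\Bigr)^{1/2} \;\le\; \prod_{i,j}(|a_{ij}|+1) \;\le\; 2^{L},
\end{equation*}
where the last step uses the definition of $L$, which contains the terms $\sum_{i,j}\lceil\log_{2}(|a_{ij}|+1)\rceil$ (and the remaining summands in $L$ are nonnegative, so the bound is preserved). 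This yields $x_i^* \ge 2^{-L}$.

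Next I would handle the dual bound. From $A_B^T y^* = c_B$ and $s^* = c - A^T y^*$, for any $i \notin B$ we obtain
\begin{equation*}
s_i^* \;=\; c_i - A_i^T (A_B^T)^{-1} c_B \;=\; \frac{\det(M_i)}{\det(A_B)},
\end{equation*}
where $M_i$ is an $(m{+}1)\times(m{+}1)$ matrix built from $A_B$, $A_i$, $c_B$, and $c_i$ via the standard Schur-complement/Cramer trick. Again both numerator and denominator are integers, the numerator is nonzero whenever $s_i^* > 0$, and Hadamard's inequality applied to $M_i$ (now using the full definition of $L$ including the $c$- and $b$-terms in a symmetric way) gives $|\det(M_i)|\le 2^{L}$ and $|\det(A_B)|\le 2^{L}$, so $s_i^* \ge 2^{-L}$.

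The only real obstacle is bookkeeping: one must verify that the Hadamard bound on the relevant determinants is genuinely dominated by $2^{L}$ for the precise definition of $L$ used in the paper, in particular that the $+1$ inside $\lceil\log_2(|a_{ij}|+1)\rceil$ absorbs the square-root/$\ell_2$ slack in Hadamard's inequality and that the contributions from $b$ and $c$ are correctly accounted for in the dual case. Everything else is routine linear algebra once the basic-solution structure is fixed.
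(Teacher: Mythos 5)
Your proposal is correct and is essentially the paper's proof: the paper gives no argument of its own, citing Khachiyan's classical result, and the Cramer's-rule-plus-Hadamard argument you give is exactly that classical proof. The one step you flag as needing verification does go through, since $\sum_{i} a_{ij}^{2} \le \prod_{i}\bigl(1+|a_{ij}|\bigr)^{2}$ gives $\prod_{j\in B}\bigl(\sum_{i} a_{ij}^{2}\bigr)^{1/2} \le \prod_{i,j}\bigl(|a_{ij}|+1\bigr) \le 2^{L}$ (the additive $mn+m+n$ and the $b$-, $c$-terms in the paper's definition of $L$ only add slack), and for the dual bound you in fact only need the denominator estimate $|\det(A_B)|\le 2^{L}$ together with the numerator being a nonzero integer.
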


Lemma~\ref{lemma: L bound} is a fundamental result in the complexity analysis of IPMs. It means that after enough number of iterations of IPMs, a decision variable can be rounded to zero if it is smaller than $2^{-L}$. Then, by a rounding procedure, one can find an exact optimal solution for linear optimization \cite{roos2005interior, wright1997primal}. In the proposed algorithm, all calculations happen on a quantum machine with precision $\epsilon=2^{-tL}$ where $t$ is a small constant, less than $10$. This high level of accuracy justifies describing the algorithm as \emph{almost-exact}. The only calculation that happens on a classical computer is updating the solution and vector-vector summation. In this paper, we use the dual logarithmic barrier method, which has a simple framework. At each step of the dual log barrier IPM, we need to solve the following Newton system
\begin{equation}
\begin{bmatrix}
    I & A^T \\
    AS^{-2}  & 0
\end{bmatrix} \begin{bmatrix}
    \Delta s \\
    \Delta y
\end{bmatrix} =
\begin{bmatrix}
    0 \\
    \frac{1}{\mu}(b-AS^{-1}e)
\end{bmatrix},
\end{equation}
where $S={\rm diag}(s)$.
Let $\hat{\Delta s}= S^{-2} \Delta s$, we can have the system 
\begin{equation}\label{eq: AS}
\begin{bmatrix}
    S^{2} & A^T \\
    A  & 0
\end{bmatrix} \begin{bmatrix}
    \hat{\Delta s} \\
    \Delta y
\end{bmatrix} =
\begin{bmatrix}
    0 \\
    \frac{1}{\mu}(b-AS^{-1}e)
\end{bmatrix}.
\end{equation}
One can easily verify that $M=\begin{bmatrix}
    S^{2} & A^T \\
    A  & 0
\end{bmatrix}$ is a symmetric positive definite matrix, and so system \eqref{eq: AS} has a unique solution \cite{roos2005interior}. Given $s$, one can build block-encodings of implementing  matrix $M$ and preparing state $\sigma =\begin{bmatrix}
    0 \\
    \frac{1}{\mu}(b-AS^{-1}e)
\end{bmatrix} $ efficiently, assuming that matrix $A$ stored in QRAM in advance. The general steps of the proposed almost exact QIPM using a short-step framework are described in Algorithm~\ref{alg:AE-QIPM}.

The dual logarithmic barrier method starts with a strictly feasible dual solution $(y^0,\ s^0)\in \mathcal{D}^\circ$ and a $\mu^0>0$ such that $(y^0,\ s^0)$ is close to the $\mu^0$-center in the sense of the proximity measure $\delta(s^0,\mu^0)$, which, \textcolor{blue}{for given $s$ and $\mu$,} is defined as,
\begin{equation*}
    \begin{aligned}
        \delta(s, \mu) := \left\| s^{-1}\Delta s \right\|_{2}.
    \end{aligned}
\end{equation*}

\begin{algorithm}
\caption{Almost Exact QIPM (AE-QIPM)}\label{alg:AE-QIPM}
    \begin{algorithmic}
        \STATE \textbf{INPUT} Dual feasible solution $(y^0,s^0)$, $\mu^0>0$, $0<\theta<1$, and $\delta \left((y^0, s^0), \mu^0\right) < \frac{1}{2}$, where $\delta$ is the proximity measure from \cite{roos2005interior,wu2024quantum}
        \STATE Store $A, b,c $ on QRAM
        \STATE $k \gets 1$
        \WHILE{$\mu > 2^{-2L}$}
        \STATE $(\Delta y^{k}, \hat{\Delta s}^{k} ) \gets $ Solve system \ref{eq: AS} with precision $\epsilon = 2^{-tL}$
        \STATE $y^{k+1}\gets y^{k}+ \Delta y^{k}$
        \STATE $s^{k+1}\gets s^{k}+ (S^{k})^{2} \hat{\Delta s}^{k}$
        \STATE $\mu^{k+1} = (1-\theta) \mu^k$
        \STATE $k \gets k+1$
        \ENDWHILE
    \end{algorithmic}
\end{algorithm}
As we analyze the worst-case complexity, we assume $m=\Ocal(n)$ and matrices are fully dense.
\begin{theorem}\label{theo: IPM}
    The number of iterations in Algorithm~\ref{alg:AE-QIPM} is at most $$\Ocal (\sqrt{n}L).$$
\end{theorem}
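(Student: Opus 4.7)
The plan is to reduce the analysis to the standard short-step dual logarithmic barrier IPM analysis (as in \cite{roos2005interior,wu2024quantum}), treating the inexact Newton step computed with precision $\epsilon=2^{-tL}$ as a small perturbation of the exact Newton step. The key invariant to maintain inductively is
\[
\delta\bigl((y^{k},s^{k}),\mu^{k}\bigr)<\tfrac{1}{2}\qquad\text{for all } k\ge 0,
\]
which is guaranteed for $k=0$ by the initialization. Given this invariant, the barrier parameter decreases as $\mu^{k}=(1-\theta)^{k}\mu^{0}$, so the while-loop terminates once
\[
k\;\ge\;\frac{\log(\mu^{0})+2L\log 2}{-\log(1-\theta)}.
\]
Choosing the short-step value $\theta=\Theta(1/\sqrt{n})$ gives $-\log(1-\theta)=\Theta(1/\sqrt{n})$, and since the initialization can be arranged so that $\log\mu^{0}=\mathcal{O}(L)$, this immediately yields the claimed bound of $\mathcal{O}(\sqrt{n}L)$ iterations.

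The main work is therefore to verify preservation of the invariant across one iteration in the presence of inexactness. I would write the computed direction as $(\Delta y^{k},\hat{\Delta s}^{k})=(\Delta y^{\star},\hat{\Delta s}^{\star})+(e_{y},e_{s})$, where the starred quantities are the exact solutions of \eqref{eq: AS} and $\|(e_{y},e_{s})\|\le \epsilon\,\|M^{-1}\|$ (in the norm actually controlled by the QLSA+QTA subroutine of Section~\ref{sec: QLSA}). The classical short-step analysis shows that the \emph{exact} step, coupled with the update $\mu^{+}=(1-\theta)\mu$, produces an iterate with proximity at most some $\delta^{\star}<\tfrac{1}{2}$, provided $\theta\le c/\sqrt{n}$ for an absolute constant $c$. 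A direct perturbation bound for $\delta$ with respect to the change of variables $\Delta s=S^{2}\hat{\Delta s}$ then gives
\[
\delta\bigl((y^{+},s^{+}),\mu^{+}\bigr)\;\le\;\delta^{\star}+C(n,\kappa_{M})\,\epsilon,
\]
where $C(n,\kappa_{M})$ is at most polynomial in $n$ and the condition number of $M$. Because $\log C(n,\kappa_{M})=\mathcal{O}(L)$ (using that the entries of $M$ are polynomially bounded in the input data along the central neighborhood), taking $t$ a sufficiently large absolute constant in $\epsilon=2^{-tL}$ drives the perturbation term below $\tfrac{1}{2}-\delta^{\star}$, and the invariant is preserved.

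The technical heart of the proof, and the step I expect to be the main obstacle, is the perturbation bound in the last display. One has to track precisely how the residual of the quantum solve in \eqref{eq: AS} propagates through $\Delta s=S^{2}\hat{\Delta s}$ into the proximity measure $\delta(s,\mu)=\|s^{-1}\Delta s\|_{2}$. Since components of $s$ may be nearly zero and $S^{2}$ may thus be very small or very large, naive norm bounds blow up; the short-step structure (in particular $\|S^{-1}\Delta s^{\star}\|_{2}\le\tfrac{1}{2}$) and the symmetry/positive-definiteness of $M$ must be exploited to absorb these factors and confirm that $\log C(n,\kappa_{M})=\mathcal{O}(L)$. Once this amplification bound is in place, the iteration count in the first paragraph closes the argument.
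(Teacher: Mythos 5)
There is a genuine gap. Your argument reduces everything to the standard \emph{feasible} short-step analysis by maintaining only the proximity invariant $\delta((y^k,s^k),\mu^k)<\tfrac12$ for the \emph{original} problem and absorbing the solver error into a perturbation bound $\delta\le\delta^\star+C(n,\kappa_M)\epsilon$. But the error of the quantum solve lands in the first block equation of \eqref{eq: AS}, i.e.\ $\Delta s + A^T\Delta y=\xi$ rather than $0$, so after one step the iterate satisfies $A^Ty^1+s^1=c+\xi^1$: dual feasibility with respect to $c$ is destroyed, and the residuals accumulate to $r^k=\sum_j\xi^j$ over all $\mathcal{O}(\sqrt{n}L)$ iterations. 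The feasible dual log-barrier machinery you invoke (and the meaning of $\delta(s,\mu)$ itself, which is defined through the Newton step of a \emph{feasible} point) does not apply to these infeasible iterates, and your proof never controls or removes the accumulated residual, so the concluding claim that the final iterate is near-optimal for the original problem is unsupported. Treating the whole effect as a drift in $\delta$ conflates two different perturbations: the per-iteration displacement of the step and the persistent displacement of the feasible set.

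The paper closes exactly this gap by a different device: it reinterprets the iterates as \emph{exactly} dual feasible for a sequence of perturbed problems $(A,b,c+r^k)$, compares the actually computed step to the ``artificial'' Newton step of the perturbed problem with slack matrix $\tilde S^0$, and shows that the closeness conditions \eqref{eq: condition iter1} required by Theorem 3.3 of \cite{wu2024quantum} hold once $\|\xi^i\|\le 2^{-tL}$; the $\mathcal{O}(\sqrt{n}L)$ bound is then inherited for the perturbed problem. Only afterwards does it pass back to the original problem, observing that the terminal iterate lies in a $2^{(1-t)L}$-neighborhood of $\mathcal{PD}^*$ and invoking Lemma~\ref{lemma: L bound} together with a rounding procedure. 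Your proposal is missing both the perturbed-problem reinterpretation (or some substitute, such as an OSS-type feasibility-preserving reformulation) and the final rounding/projection step; without one of these, the argument does not establish the theorem as it is used in the paper.
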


\subsubsection{Proof of Theorem~\ref{theo: IPM}}
We start with a strictly feasible solution $(x^0,y^0,s^0)$. In the dual logarithmic barrier IPM, we do not compute the value of $x$ and $y$, but they exist. We have
\begin{equation*}
    \begin{aligned}
        Ax^0 = b,\ A^T y^0 + s^0 = c, \ s^0>0.
    \end{aligned}
\end{equation*}
Then we use a quantum subroutine to compute an inexact $\Delta s^0$ with associated error $\xi^1$. After a full Newton step, we have
\begin{equation*}
    Ax^1 = b,\ A^T y^1 + s^1 = c + \xi^1,\ s^1>0.
\end{equation*}
Now we get a feasible solution for problem~$(A,b,c+\xi_1)$. We do another full Newton step, then we have
\begin{equation*}
    \begin{aligned}
        Ax^2 = b,\ A^Ty^2 + s^2 = c+\xi^1 + \xi^2,\ s^2>0.
    \end{aligned}
\end{equation*}
We can keep doing this until we have
\begin{equation*}
    \begin{aligned}
        Ax^k = b,\ A^Ty^k + s^k = c+ \sum_{i=1}^k\xi^i,\ s^k>0.
    \end{aligned}
\end{equation*}
Then, we can rewrite all of them into
\begin{equation*}
    \begin{aligned}
        Ax^k = b,\ A^Ty^k + s^k +\sum_{j=1}^k\xi^{j} = c + r^k,\ s^k>0,
    \end{aligned}
\end{equation*}
where $r^k = \sum_{j=1}^k\xi_j$.
This implies we obtained a series of feasible iterates for problem $(A,b,c+r^k)$. 
The Newton steps are inexact but satisfy the conditions in \cite{wu2024quantum}, the $\Ocal(\sqrt{n}L)$ complexity still holds.
However, these Newton steps are artificial steps because we do not know exactly the errors $\xi^i$. We need to show that the actual Newton steps we inexactly compute are close enough to these artificial Newton steps, and the inexactness is acceptable for the convergence conditions.

In the first iteration, the actual and artificial Newton steps are computed as
\begin{equation*}
    \begin{aligned}
        \Delta s^0 &= -A^T \left( A (S^0)^{-2} A^T \right)^{-1} \frac{1}{\mu^0} \left( b- \mu^0 A (S^0)^{-1} e \right) +\xi^1\\
        \Delta \tilde{s}^0 &= -A^T \left( A (\tilde{S}^0)^{-2} A^T \right)^{-1} \frac{1}{\mu^0} \left( b- \mu^0 A (\tilde{S}^0)^{-1} e \right),
    \end{aligned}
\end{equation*}
where
\begin{equation*}
    \begin{aligned}
        \tilde{S}^0 = s^0+ r^k.
    \end{aligned}
\end{equation*}
According to \cite{wu2024quantum}, we need 
\begin{equation*}
    \begin{aligned}
        \left\|  (\tilde{S}^0)^{-1} (\Delta s^0 - \Delta \tilde{S}^0)\right\|_2 \leq 0.1\delta_{\tilde{c}}(\tilde{s}^0, \mu^0),
    \end{aligned}
\end{equation*}
where $\delta_{\tilde{c}}$ is the proximity measure for the perturbed problem $(A, b, \tilde{c})$ with $\tilde{c} = c + r^k$.
This condition can be guaranteed when
\begin{equation}\label{eq: condition iter1}
    \begin{aligned}
        \left\| (S^0 (\tilde{S}^0)^{-1} (I - S^0 (\tilde{S}^0)^{-1})) \right\|_2 &\leq 0.033 \delta_{\tilde{c}}(\tilde{s}^0, \mu^0),\\
        \left\| I - (S^0 (\tilde{S}^0)^{-1})^2 \right\|_2 &\leq 0.033,\\
        \left\| (\tilde{S}^0)^{-1} \xi^1 \right\|_2 &\leq 0.033 \delta_{\tilde{c}}(\tilde{s}^0, \mu^0).
    \end{aligned}
\end{equation}
Notice that all three conditions can be satisfied by pushing $\xi^i\leq 2^{-tL}$. 
This proves that our inexact Newton step is a feasible inexact Newton step for the perturbed problem. 
Then, according to Theorem 3.3 of \cite{wu2024quantum}, we have the $\mathcal{O}(\sqrt{n}L)$ complexity. Additionally, it is easy to verify that $t=4$ satisfies all requirements.

After $\Ocal(\sqrt{n}L)$ iterations, we have an $\tilde{x}>0$ such that 
 \begin{align*}
        A\tilde{x}&=b, \\
        A^T y^k + s^k &= c+r^k,\\
        (\tilde{x})^T s^k &\leq 2^{-2L},
    \end{align*}
where $r^k=\sum_{i=1}^k\xi^i$. It is easy to verify that $(\tilde{x}, y^k, s^k)$ is a $2^{-2L}$-optimal solution for the perturbed problem $(A,b,c+r^k)$, and one can calculate the exact optimal solution by a rounding procedure. It is easy to verify that $\|r^k\|\leq 2^{(1-t)L}$. In the remaining, we show how we can retrieve an optimal solution of the original problem with a rounding procedure from the optimal solution for the perturbed problem.

It is straightforward to see that $(\tilde{x}, y^k, s^k)$ is in a $2^{(1-t)L}$-neighborhood of the optimal set for the original problem $(A,b,c)$. As the smallest nonzero element of $s^*$ and $x^*$ is greater than $2^{-L}$, using partitions $B$ and $N$ of this solution, by solving a constrained least squares problem, an optimal solution for the original problem can be obtained. For the details of the rounding procedures, refer to Chapter 7 of \cite{wright1997primal}. 

It is worth noting that the rounding procedures are strongly polynomial-time methods. They can also be quantized using quantum linear system solvers; however, we do not explore the cost and implementation details of rounding procedures in this paper, as it is beyond the scope of this paper.

\subsection{Quantum Subroutine}\label{sec: Sub}
In this section, we analyze the complexity of building and solving system \eqref{eq: AS}. We use the general scheme of the Quantum Tomography framework of \cite{mohammadisiahroudi2024quantum,mohammadisiaroudi2023exponentially}. We assume that we have access to a large enough QRAM, and we store data $A, b, c$ initially on QRAM with worst-case $\Ocal(n^2)$ complexity. At each state, we need to store $s$ on QRAM and build and solve System \eqref{eq: AS} using the iterative quantum linear solver of \cite{mohammadisiaroudi2023exponentially}.
\begin{algorithm}[H]
\caption{Quantum Linear Solver}\label{alg:QLSA}
    \begin{algorithmic}
        \STATE \textbf{INPUT} $(A,b,c)$ stored on QRAM,
        \STATE Store $s$ on QRAM
        \STATE $k \gets 1$
        \STATE $z^k \gets 0$
        \WHILE{$\|\Delta y^k - \Delta y^{k-1}\|  > 2^{-4L}$}
        \STATE Prepare State $\ket{r^k} = \ket{\sigma - M z^k}$
        \STATE Apply inverse of block encoding of $M$ using QSVT \cite{chakraborty2018power}
        \STATE Extract classical solution $\frac{p^k}{\|p^k\| }= \frac{M^{-1}r^k}{\|M^{-1}r^k\|} $ via Tomography \cite{van2023quantum} with precision $\epsilon = 10^{-2}$
       \STATE Estimate norm of $\|p^k\|$ and $\|r^k\|$
        \STATE $z^{k+1}\gets z^{k}+ \frac{p^{k}}{\|r^k\|}$
        \STATE $k \gets k+1$
        \ENDWHILE
    \end{algorithmic}
\end{algorithm}
At each iteration of Algorithm~\ref{alg:QLSA}, the only classical operation is updating the solution by a vector summation with $\Ocal(n)$ arithmetic operations. In the following, we calculate the cost of quantum operations.
\begin{lemma}\label{lem: block1}
    Given $A$ and $S$ stored on QRAM, the following statements are true:
    \begin{itemize}
        \item We can construct a block-encoding of $M$ using $\Ocal(\text{polylog}(\frac{n}{\epsilon}))$ queries to QRAM.
        \item We can prepare the the state $\ket{r}$ using $\Ocal(\text{polylog}(\frac{n}{\epsilon}))$ queries to QRAM.
        \item We can apply $M^{-1}$ using $\tilde{\Ocal}_{n,\kappa, \frac{1}{\epsilon}}(\kappa\|A\|_F)$ queries to QRAM. 
        \item Norm estimation of $p^k$ and $r^k$ costs $\tilde{\Ocal}_{n,\kappa, \frac{1}{\epsilon}}(\kappa\|A\|_F)$ queries to QRAM.
    \end{itemize}
\end{lemma}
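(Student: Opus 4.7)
\subsubsection{Proof proposal for Lemma~\ref{lem: block1}}

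The plan is to reduce each of the four claims to standard quantum data-structure primitives built on top of the assumption that $A$ and $S$ are stored in QRAM: (i) block-encodings of matrices stored in QRAM, (ii) quantum state preparation from QRAM, (iii) linear combinations of block-encodings and unitaries (LCU), and (iv) QSVT-based matrix inversion together with amplitude estimation. All costs below are stated in queries to QRAM, with constant-factor polynomial overheads suppressed in the $\widetilde{\Ocal}$ notation.

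For the first bullet, I would invoke the Chakraborty--Gilyén--Jeffery construction of block-encodings from QRAM to obtain $(\|A\|_F, \cdot, \epsilon)$-block-encodings of $A$ and $A^T$, and a $(\|S\|^2, \cdot, \epsilon)$-block-encoding of the diagonal matrix $S^2$, each using $\Ocal(\text{polylog}(n/\epsilon))$ queries. The $2\times 2$ block structure of $M$ is then realized by one ancilla qubit controlling the choice of sub-block; because $M$ is symmetric and the off-diagonal sub-blocks are transposes of one another, the assembled block-encoding of $M$ requires only polylogarithmically many additional queries via an LCU step.

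For the second bullet, the top half of $\sigma$ is zero, so it suffices to prepare the lower block $(b - AS^{-1}e)/\mu$. Since $S$ is diagonal and stored in QRAM, the state $\ket{S^{-1}e}$ is prepared in polylogarithmic queries; applying the block-encoding of $A$ yields a state proportional to $AS^{-1}e$, and an LCU step against $\ket{b}$ (also prepared from QRAM in polylog queries) produces $\ket{\sigma}$. The residual state $\ket{r} = \ket{\sigma - Mz^k}$ is then obtained from one further LCU step between $\ket{\sigma}$ and the action of the block-encoding of $M$ on $\ket{z^k}$, where $z^k$ is kept in QRAM across iterations of Algorithm~\ref{alg:QLSA}; each of these steps adds only polylogarithmic overhead.

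The third bullet is the main technical step and the one I expect to require the most careful bookkeeping. Since $M$ is symmetric positive definite, one can apply the QSVT-based matrix inversion of Gilyén--Su--Low--Wiebe, as specialized in \cite{chakraborty2018power}, to the block-encoding constructed above; this produces an implementation of $M^{-1}$ acting on $\ket{r^k}$ at cost $\widetilde{\Ocal}(\kappa\alpha)$ queries, where $\alpha = \Ocal(\|A\|_F + \|S\|^2)$ is the subnormalization of the block-encoding of $M$. Under the worst-case regime $m = \Ocal(n)$ with fully dense $A$, the $\|A\|_F$ term dominates, yielding the advertised $\widetilde{\Ocal}_{n,\kappa,1/\epsilon}(\kappa \|A\|_F)$ bound. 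Finally, for the fourth bullet, the norms $\|p^k\|$ and $\|r^k\|$ are recovered by standard amplitude estimation applied to the preparation circuits for the corresponding states: since the preparation of $\ket{p^k}$ invokes $M^{-1}$ once, its amplitude-estimation cost matches the third-bullet bound, while $\ket{r^k}$ is cheaper and is absorbed into the same $\widetilde{\Ocal}_{n,\kappa,1/\epsilon}(\kappa \|A\|_F)$ estimate.
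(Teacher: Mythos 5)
Your proposal is correct and follows essentially the same route as the paper, whose entire proof is a citation of Propositions 1--6 of \cite{augustino2023quantum}; those propositions are exactly the QRAM-based block-encoding construction, state preparation via LCU, QSVT-based inversion, and amplitude/norm estimation primitives that you unpack explicitly. The only point deserving extra care is the subnormalization contribution of the $S^{2}$ block (not obviously dominated by $\|A\|_F$) to the cost of applying $M^{-1}$, which you already flag as the step requiring careful bookkeeping.
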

The proof of Lemma \ref{lem: block1} is the direct result of Prepositions 1 to 6 of \cite{augustino2023quantum}. 

\begin{lemma}\label{lem: qta iteration}
    The number of iterations of Algorithm~\ref{alg:QLSA} is at most $\Ocal(L)$.
\end{lemma}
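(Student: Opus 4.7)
The plan is to view Algorithm~\ref{alg:QLSA} as the iterative classical--quantum linear solver of \cite{mohammadisiaroudi2023exponentially,mohammadisiahroudi2024quantum} applied to $Mz=\sigma$, so that geometric residual decay translates directly into an $\Ocal(L)$ outer iteration bound. First I would introduce the residual $r^k=\sigma-Mz^k$ and verify that the reconstructed update $\tilde{p}^k$ obtained by combining tomography (at the fixed precision $\epsilon_T=10^{-2}$) with norm estimation approximates $M^{-1}r^k$ with a bounded multiplicative error $c_T$ of order $\epsilon_T$, by invoking Lemma~\ref{lem: block1} together with the QSVT-based inversion and vector-state tomography guarantees of \cite{chakraborty2018power,van2023quantum}.

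Next I would convert this forward-error guarantee into a contraction on residuals. Since $r^{k+1}=r^k-M\tilde{p}^k=M(M^{-1}r^k-\tilde{p}^k)$, I would carry out the tighter accounting of \cite{mohammadisiaroudi2023exponentially}, which exploits the symmetric positive definite structure of $M$ to obtain an effective contraction rate $\rho<1$ that is independent of $L$, rather than the naive bound $c_T\,\kappa(M)$. This yields $\|r^k\|\le\rho^k\|r^0\|=\rho^k\|\sigma\|$. Translating the stopping criterion $\|\Delta y^k-\Delta y^{k-1}\|\le 2^{-4L}$ into a residual threshold costs at most a factor $\|M^{-1}\|$, and since $\log\|\sigma\|$ and $\log\|M^{-1}\|$ are both $\Ocal(L)$ (using the integrality of the inputs to the IPM and Lemma~\ref{lemma: L bound} applied to the current $s$, which lower-bounds its nonzero components by $2^{-\Ocal(L)}$), the requirement $\rho^k\|\sigma\|\le 2^{-4L}/\|M^{-1}\|$ produces $k=\Ocal(L)$.

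The main obstacle will be establishing the constant contraction factor $\rho<1$ rigorously: a naive forward-error bound gives $\rho=c_T\,\kappa(M)$, which can exceed one when $M$ becomes ill-conditioned in late IPM iterations. The resolution, mirroring classical iterative refinement, is that the outer update $z^{k+1}=z^k+\tilde{p}^k$ directly reduces the \emph{backward} error rather than the forward error, so the relevant contraction rate depends on $\epsilon_T$ alone. Making this precise requires separating the two quantum error sources in $\tilde{p}^k$, namely the unit-vector error from tomography and the scalar error from norm estimation, and bounding $\|M\tilde{p}^k-r^k\|/\|r^k\|$ by $\Ocal(\epsilon_T)$ directly; once this constant backward-error rate is in hand, the outer iteration count follows from the routine logarithmic calculation sketched above.
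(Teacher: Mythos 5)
Your overall route is the one the paper intends: the paper itself offers no self-contained proof of Lemma~\ref{lem: qta iteration}, only the sentence that it ``is based on \cite{mohammadisiaroudi2023exponentially},'' and the argument in that reference is exactly your skeleton --- geometric decay of the residual $r^k=\sigma-Mz^k$ under iterative refinement, combined with the observation that the initial residual and the stopping threshold differ by a factor of $2^{\Ocal(L)}$, so a constant contraction rate yields $\Ocal(L)$ outer iterations. Your accounting of the dynamic range (bounding $\log\|\sigma\|$ and $\log\|M^{-1}\|$ by $\Ocal(L)$ and converting the stopping rule on $\|\Delta y^k-\Delta y^{k-1}\|$ into a residual threshold) is fine, though note that Lemma~\ref{lemma: L bound} concerns optimal \emph{basic} solutions, not IPM iterates; the $2^{-\Ocal(L)}$ lower bound on the components of the current $s$ should instead come from centrality together with $\mu\geq 2^{-2L}$.

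The genuine gap is the one you flag yourself and then dispose of too quickly: the constant contraction factor $\rho<1$. The guarantee delivered by tomography \cite{van2023quantum} plus norm estimation is a \emph{forward} relative error, $\|\tilde{p}^k-M^{-1}r^k\|\leq \epsilon_T\|M^{-1}r^k\|$, and from this the only bound available on the new residual is $\|r^{k+1}\|=\|M(M^{-1}r^k-\tilde{p}^k)\|\leq \|M\|\,\epsilon_T\,\|M^{-1}r^k\|\leq \kappa(M)\,\epsilon_T\,\|r^k\|$. Relabeling this as a ``backward error'' does not remove the factor $\kappa(M)$: the update $z^{k+1}=z^k+\tilde{p}^k$ reduces the residual only insofar as $M\tilde{p}^k$ is close to $r^k$, and certifying $\|M\tilde{p}^k-r^k\|\leq \Ocal(\epsilon_T)\|r^k\|$ from a forward-error oracle costs exactly one power of $\kappa$. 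The analysis in \cite{mohammadisiaroudi2023exponentially} resolves this by taking the per-call tomography precision proportional to $1/\kappa$ (which is why its query count carries $\kappa^2$: one factor from the QLSA, one from tomography at precision $1/\kappa$); with the fixed $\epsilon_T=10^{-2}$ written into Algorithm~\ref{alg:QLSA}, a $\kappa$-independent contraction does not follow from the stated guarantees. To close your proof you must either (i) scale $\epsilon_T$ with $1/\kappa(M)$ and propagate that into the per-iteration cost, or (ii) supply a structural argument, specific to the SPD matrix $M$ of system~\eqref{eq: AS}, showing that the tomography error is controlled in the $M$-weighted norm in which the residual recursion contracts at rate $\Ocal(\epsilon_T)$; as written, neither step is carried out, and the claimed $\rho<1$ independent of $L$ and $\kappa$ remains unproved.
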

The proof of Lemma \ref{lem: qta iteration} is based on \cite{mohammadisiaroudi2023exponentially}. Additionally, the total complexity of Algorithm~\ref{alg:QLSA} is based on the analysis provided by \cite{mohammadisiaroudi2023exponentially}. 

\begin{theorem}
    Assuming $(A, b, c)$ is stored on QRAM, Algorithm~\ref{alg:QLSA} can find a $2^{-tL}$-precise solution for System~\eqref{eq: AS} with 
    $$\tilde{\Ocal}_{n\kappa L}(n\kappa\|A\|_F)$$
    iterations.
\end{theorem}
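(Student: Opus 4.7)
The plan is to multiply the $\Ocal(L)$ outer iteration count from Lemma~\ref{lem: qta iteration} by the per-iteration quantum query cost implied by Lemma~\ref{lem: block1}, arguing that the tomography step is the dominant contributor. First, I would invoke Lemma~\ref{lem: qta iteration} to pin the outer iteration count of Algorithm~\ref{alg:QLSA} at $\Ocal(L)$. The underlying justification is that each outer step replaces $z^k$ by $z^k + p^k/\|r^k\|$, where $p^k/\|p^k\|$ is obtained from tomography at a fixed constant relative accuracy $10^{-2}$; because $M$ is symmetric positive definite, the residual $r^k = \sigma - M z^k$ contracts by a constant factor strictly less than $1$ per round, so reaching the stopping criterion $\|\Delta y^k - \Delta y^{k-1}\| \le 2^{-4L}$, and hence an overall $2^{-tL}$-accurate solution of~\eqref{eq: AS}, requires $\Ocal(L)$ refinement rounds.

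Next, I would tally the per-iteration quantum cost using Lemma~\ref{lem: block1}. Block-encoding $M$ and preparing $\ket{r^k}$ each contribute only polylogarithmic queries and are absorbed into the tilde. Applying $M^{-1}$ via QSVT and estimating the norms $\|p^k\|$ and $\|r^k\|$ each cost $\tilde{\Ocal}(\kappa\|A\|_F)$ queries. The dominant cost is tomography on the unit vector $p^k/\|p^k\|$ of dimension $p = m + n = \Ocal(n)$; the algorithm of \cite{van2023quantum} needs $\Ocal(n)$ state preparations at constant target precision, and each preparation invokes the QLSA subroutine at cost $\tilde{\Ocal}(\kappa\|A\|_F)$, yielding a per-iteration total of $\tilde{\Ocal}(n\kappa\|A\|_F)$. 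Multiplying this by the $\Ocal(L)$ outer iterations and absorbing polylogarithmic factors in $n$, $\kappa$, and $L$ into $\tilde{\Ocal}_{n\kappa L}$ gives the claimed $\tilde{\Ocal}_{n\kappa L}(n\kappa\|A\|_F)$ bound.

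The step I expect to be the main obstacle is certifying the contraction of the outer residual in the presence of the constant-precision inner errors: one must verify that the tomography error on $p^k/\|p^k\|$ combined with the relative errors in the estimates of $\|p^k\|$ and $\|r^k\|$ still produces a contraction factor bounded strictly away from $1$, uniformly in $n$, $\kappa$, and $L$. This is exactly the iterative classical-quantum refinement analysis of \cite{mohammadisiaroudi2023exponentially}, which applies here because $M$ is SPD and therefore admits a clean energy-norm contraction argument; once this contraction is established, the two-factor budget above closes the proof, and the passage from residual precision to solution precision costs at most a $\kappa$-factor which is already inside the tilde.
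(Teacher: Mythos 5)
Your proposal is correct and follows essentially the same route as the paper, which itself gives no explicit proof beyond combining Lemma~\ref{lem: qta iteration} ($\Ocal(L)$ refinement rounds, with the contraction argument delegated to \cite{mohammadisiaroudi2023exponentially}) with the per-round costs of Lemma~\ref{lem: block1}, where constant-precision tomography on an $\Ocal(n)$-dimensional state dominates at $\tilde{\Ocal}(n\kappa\|A\|_F)$ queries and the $L$ factor is absorbed into the polylogarithmic suppression. Your added discussion of why the constant-precision inner errors still permit a uniform contraction factor is exactly the content the paper outsources to its citation, so there is no substantive difference in approach.
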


\subsection{Proposed IR-AE-QIPM}\label{sec: IR}
In this section, we discuss how to use the iterative refinement method (IR) for LO problems to improve complexity as in \cite{wu2024quantum}, and provide the full description of our proposed algorithm. The first IR method for LO has been proposed by \cite{gleixner2016iterative}. Mohammadisiahroudi et al. \cite{mohammadisiahroudi2024efficient} first showed that using iterative refinement can improve the complexity of QIPMs w.r.t precision and condition number. Further, in \cite{mohammadisiahroudi2023inexact}, a quadratically convergent iterative refinement scheme was proposed for feasible IPMs. An IR method for dual log-barrier QIPM has been developed in \cite{wu2024quantum}.

In \cite{wu2024quantum}, the iterative refinement method for the LO problem works as follows:
\begin{itemize}
    \item[1.] Start with the original problem and solve it to a low accuracy.
    \item[2.] If the accuracy of the original problem is not enough, construct a refining problem using the current iteration values; otherwise, the algorithm halts.
    \item[3.] Solve the refining problem to a low accuracy and update the solution to the original problem; then, go to step 2.
\end{itemize}

In our proposed algorithm, after each solve, we have a feasible solution to a perturbed problem. To use the iterative refinement method, we need to construct a solution to the original problem from the solution to a perturbed problem.
To do so, we need a projection procedure. We use Algorithm~\ref{alg:QLSA} to solve the following problem
\begin{equation*}
    \begin{aligned}
        \min_{y} \|A^T y + s_k - c\|_2,
    \end{aligned}
\end{equation*}
which is equivalent to
\begin{equation*}
    \begin{aligned}
        AA^T y = A(c-s_k).
    \end{aligned}
\end{equation*}
Then we have 
\begin{equation*}
    \begin{aligned}
        s = c - A^T y.
    \end{aligned}
\end{equation*}
According to the argument in the previous section, this $(y, s)$ is feasible for the original problem with a duality gap bounded by twice the low accuracy.
Then, we can use the IR scheme to refine the solution to high accuracy as in \cite{wu2024quantum}.

To get the full complexity of the proposed algorithm, we discuss the accuracy needed for $\xi^i$. In the first iteration, we need conditions \eqref{eq: condition iter1}. Theoretically, $\Delta_{\tilde{c}}$ might be zero, which implies the corresponding Newton system right-hand side is zero. We do not need to solve such Newton systems. Instead, we check the norm of the right-hand side vector. If the norm is too small $(\leq 2^{-4L})$, we update $\mu$ without computing the Newton step. Then, conditions \eqref{eq: condition iter1} can be guaranteed when
\begin{equation*}
    \begin{aligned}
        \|\xi^i\|_2 \leq {\rm poly}\left(\frac{2^{-4L} }{n\kappa_{A(s^0)^{-1}}} \right) \approx {\rm poly} (2^{-4L}), \ \forall i\in[k].
    \end{aligned}
\end{equation*}
This bound also works for the remaining iterations.
Now, we present the pseudocode of our proposed algorithm and the main theorem.

\begin{algorithm}
\caption{Iteratively Refined Almost Exact QIPM (IR-AE-QIPM)}\label{alg:IR-AE-QIPM}
    \begin{algorithmic}
        \STATE \textbf{INPUT} Dual feasible solution $(y^0,s^0)$, $\mu^0>0$, $0<\theta<1$, $\delta \left((y^0, s^0), \mu^0\right) < \frac{1}{2}$, $\nabla^{(0)} = 1$, $0<\zeta\ll\tilde{\zeta}$
        \STATE Store $A, b,c $ on QRAM
        \STATE $k \gets 1$
        \STATE $(y_1, s_1) \gets$ Solve dual problem with accuracy $\tilde{\zeta}$
        \WHILE{$\nabla^{(k-1)} < \frac{1}{\zeta}$}
        \STATE $\nabla^{(k)} \gets \nabla^{(k-1)}\times\frac{1}{\tilde{\zeta}}$
        \STATE Construct the IR problem as in \cite{wu2024quantum}
        \STATE $(\hat{y}, \hat{s}) \gets $ Solve IR problem with accuracy $\tilde{\zeta}$ and project into proper subspace
        \STATE $y^{k+1}\gets y^{k}+ \frac{1}{\nabla^{(k)}} \hat{y}$
        \STATE $s^{k+1} \gets c- A^T y^{(k)}$
        \STATE $k \gets k+1$
        \ENDWHILE
    \end{algorithmic}
\end{algorithm}

\begin{theorem}[Lemma 13 of \cite{wu2024quantum}]\label{theo: IR iter}
    Algorithm \ref{alg:IR-AE-QIPM} terminates after $\Ocal\left(\frac{\log (\zeta)}{\log (\hat{\zeta})}\right)$ iterations.
\end{theorem}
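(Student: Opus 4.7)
\subsubsection{Proof Proposal for Theorem~\ref{theo: IR iter}.}

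The plan is to exploit the fact that the iterative refinement scheme is controlled by a geometrically growing scaling factor $\nabla^{(k)}$ that tracks the precision at which the original problem is effectively being solved, and then invoke the analysis of Lemma~13 of \cite{wu2024quantum} for the remaining bookkeeping. First, I would read directly off Algorithm~\ref{alg:IR-AE-QIPM} that $\nabla^{(0)}=1$ and that each pass through the while-loop performs $\nabla^{(k)} \gets \nabla^{(k-1)}/\tilde{\zeta}$, so by induction $\nabla^{(k)}=\tilde{\zeta}^{-k}$. The loop exits the first time $\nabla^{(k-1)}\geq 1/\zeta$, which rearranges to $\tilde{\zeta}^{k-1}\leq \zeta$. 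Because $0<\zeta\ll\tilde{\zeta}<1$, both $\log(1/\zeta)$ and $\log(1/\tilde{\zeta})$ are positive, so dividing yields the stopping index $k = 1+\bigl\lceil\log(\zeta)/\log(\tilde{\zeta})\bigr\rceil$, producing the claimed $\Ocal(\log(\zeta)/\log(\tilde{\zeta}))$ bound.

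The second step is to justify that the synthetic counter $\nabla^{(k)}$ is an honest proxy for the actual accuracy achieved on the original problem, so that the stopping rule is not merely vacuous. Here I would invoke the IR invariant of \cite{wu2024quantum}: after solving the refining subproblem at iteration $k$ to accuracy $\tilde{\zeta}$ and rescaling by $1/\nabla^{(k)}$, the lifted iterate $(y^{k+1},s^{k+1})$ has duality gap $\Ocal(\tilde{\zeta}/\nabla^{(k-1)}) = \Ocal(\tilde{\zeta}^{k})$ on the original problem. The projection $s^{k+1}\gets c-A^Ty^{(k)}$, implemented by solving $AA^Ty=A(c-s_k)$ through Algorithm~\ref{alg:QLSA}, restores exact primal feasibility of the dual problem and, as argued in Section~\ref{sec: IR}, enlarges the residual by at most a constant factor. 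Combining these, the true accuracy at iteration $k$ scales as $1/\nabla^{(k)}$, so the termination test $\nabla^{(k-1)}\geq 1/\zeta$ is triggered precisely when the duality gap crosses $\zeta$.

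The main obstacle I anticipate is verifying that our projection step does not break the IR invariant, because the projection itself is inexact: Algorithm~\ref{alg:QLSA} solves the normal equations only to precision $2^{-tL}$, so the feasibility residual it introduces must be controlled together with the accumulated Newton residual $r^k=\sum_i \xi^i$ from the almost-exact solves of Section~\ref{sec: ALmost}. Concretely, one has to show that both perturbations, when scaled back by $1/\nabla^{(k)}$, remain within the tolerance that Lemma~13 of \cite{wu2024quantum} requires for the amplification factor $1/\tilde{\zeta}$ to be preserved from one refinement to the next. Given the bound $\|\xi^i\|_2\lesssim 2^{-4L}$ derived at the end of Section~\ref{sec: IR} and the choice $t\geq 4$, this reduces to a routine perturbation computation that mirrors, line by line, the one in \cite{wu2024quantum}. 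Once that invariant is in hand, the iteration count is an immediate consequence of the geometric step in the first paragraph.
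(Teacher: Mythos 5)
Your first paragraph is the entire proof: the counter update $\nabla^{(k)} = \nabla^{(k-1)}/\tilde{\zeta}$ with $\nabla^{(0)}=1$ gives $\nabla^{(k)}=\tilde{\zeta}^{-k}$ deterministically, and the loop condition $\nabla^{(k-1)}<1/\zeta$ fails once $k-1 \geq \log(\zeta)/\log(\tilde{\zeta})$, which is the claimed bound. The paper itself offers no proof of this statement --- it is imported verbatim as Lemma 13 of the cited reference --- so your geometric-counter argument is the natural (and essentially the only) derivation, and it is correct; note only that the paper's theorem statement writes $\hat{\zeta}$ where the algorithm uses $\tilde{\zeta}$, a notational slip you have silently and reasonably repaired. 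Your second and third paragraphs, on whether $\nabla^{(k)}$ honestly tracks the achieved accuracy and whether the inexact projection preserves the IR invariant, are not needed for this theorem: termination is a purely syntactic consequence of the loop structure, independent of what the subproblem solves return. That material belongs to the correctness and total-complexity analysis (Theorem~\ref{theorem: main}), not to the iteration count, so including it here neither helps nor hurts, but you should not present it as a prerequisite for the $\Ocal\left(\log(\zeta)/\log(\tilde{\zeta})\right)$ bound.
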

For our purpose, we use $\zeta=2^{-tL}$ and $\hat{\zeta}$ is constant. Thus, the outer loop has $\Ocal(L)$ iteration bound. We also have a matrix-vector product at each step of this IR scheme with cost $\Ocal(n^2)$ classic arithmetic operations.

The major challenge in AE-QIPM Algorithm~\ref{alg:AE-QIPM} is that the complexity of the quantum solver depends on the condition number, and the condition number grows in each iteration of AE-QIPM. As in IR Algorithm~\ref{alg:IR-AE-QIPM}, we stop AE-QIPM early at fixed precision. It has been shown that with early termination $\kappa^{(k)}=\Ocal(\kappa_0)$ where $\kappa_0$ is the condition number of the coefficient matrix for $(y^0, s^0)$ and it is constant \cite{wu2024quantum}.

\subsection{Total Complexity}\label{sec: total}
In this section, we put together all the elements discussed in the previous sections to calculate the total worst-case complexity of IR-AE-QIPM.
\begin{theorem}\label{theorem: main}
    Algorithm~\ref{alg:IR-AE-QIPM} produces a $2^{(1-t)L}$ precise optimal solution of the LO problem using at most $$\tilde{\Ocal}_{\kappa_0, n, \|A\|_F}(n^{1.5}L\kappa_0))$$ queries to QRAM and $\mathcal{O}(n^2L)$ classical arithmetic operations.
\end{theorem}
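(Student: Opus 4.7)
\subsubsection{Proof proposal for Theorem~\ref{theorem: main}.}

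The plan is to combine four ingredients already established in the preceding sections: the outer IR iteration bound from Theorem~\ref{theo: IR iter}, the short-step IPM iteration bound from Theorem~\ref{theo: IPM} applied at \emph{constant} precision $\tilde\zeta$, the bounded-condition-number property of the early-terminated inner solves, and the quantum linear system cost from Lemma~\ref{lem: block1} and the theorem at the end of Section~\ref{sec: Sub}. The final precision of $2^{(1-t)L}$ then follows from the same perturbation/rounding argument used in the proof of Theorem~\ref{theo: IPM}, carried through the last IR refinement.

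First I would count iterations. Taking $\zeta=2^{-tL}$ and $\tilde\zeta$ a fixed constant in Algorithm~\ref{alg:IR-AE-QIPM}, Theorem~\ref{theo: IR iter} gives $\Ocal(L)$ outer IR iterations. Inside each IR iteration, Algorithm~\ref{alg:AE-QIPM} is invoked on the refining problem with target accuracy $\tilde\zeta$, so by the short-step bound of Theorem~\ref{theo: IPM} (applied to a constant-precision subproblem, hence without the extra $L$ factor) it terminates in $\Ocal(\sqrt{n})$ AE-QIPM iterations. This gives $\Ocal(\sqrt{n}\,L)$ total Newton-system solves across the algorithm.

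Next, I would bound the per-iteration quantum cost. By the early-termination argument cited from \cite{wu2024quantum} at the end of Section~\ref{sec: IR}, the coefficient matrix of system~\eqref{eq: AS} retains condition number $\Ocal(\kappa_0)$ throughout the inner runs. The feasibility conditions~\eqref{eq: condition iter1} require each $\xi^i$ to have norm ${\rm poly}(2^{-4L})$, i.e.\ precision $2^{-tL}$ for a suitable small constant $t$. By Lemma~\ref{lem: block1} and Lemma~\ref{lem: qta iteration}, Algorithm~\ref{alg:QLSA} produces such a solution in $\tilde{\Ocal}_{\kappa_0,n,\|A\|_F}(n\kappa_0\|A\|_F)$ queries to QRAM (the tomography precision is fixed at $10^{-2}$, contributing the $n$ factor, while logarithmic dependence on $L$ from the $\Ocal(L)$ IR-style inner loop of the QLSA is absorbed into the tilde). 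Multiplying by the $\Ocal(\sqrt{n}\,L)$ Newton solves gives the stated bound of $\tilde{\Ocal}_{\kappa_0,n,\|A\|_F}(n^{1.5}L\kappa_0)$ QRAM queries.

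For the classical arithmetic count, I would observe that inside each AE-QIPM iteration the only classical work is a vector summation ($\Ocal(n)$), and inside each outer IR iteration the projection step that recovers a feasible $(y,s)$ requires one matrix-vector product with $A^T$, costing $\Ocal(n^2)$; the construction of the IR problem is of the same order. Hence each IR iteration contributes $\Ocal(n^2)+\Ocal(\sqrt{n})\cdot\Ocal(n)=\Ocal(n^2)$ classical operations, and over $\Ocal(L)$ IR iterations this totals $\Ocal(n^2 L)$. Finally, precision $2^{(1-t)L}$ for the original problem follows as in the analysis of Theorem~\ref{theo: IPM}: the cumulative perturbation $\|r^k\|\le 2^{(1-t)L}$ implies the last IR solve produces an iterate inside a $2^{(1-t)L}$-neighborhood of $\mathcal{PD}^*$, which, combined with Lemma~\ref{lemma: L bound}, is rounded to an exact solution.

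The main obstacle I expect is the uniform bound $\kappa^{(k)}=\Ocal(\kappa_0)$ on the condition number across \emph{every} inner AE-QIPM call within every IR stage. Without this, the per-solve cost would acquire a factor growing with the refinement depth and the total would degrade to $\tilde{\Ocal}(n^{1.5}L^2\kappa_0)$ or worse. Verifying this uniform bound requires checking that the refining problems constructed in Algorithm~\ref{alg:IR-AE-QIPM} inherit a well-conditioned starting scaling matrix from the previous stage, so that the early-termination guarantee of \cite{wu2024quantum} applies to each inner run independently; once this is in place, the complexity composition above is essentially mechanical.
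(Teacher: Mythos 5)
Your proposal is correct and follows essentially the same route as the paper's proof: compose the $\Ocal(L)$ outer IR bound from Theorem~\ref{theo: IR iter} with the per-stage AE-QIPM iteration count and the per-solve quantum cost (with $\|A\|_F$ removed by normalization into the polylog factors), account for the $\Ocal(n^2)$ classical matrix-vector product per IR stage, and invoke the early-termination bound $\kappa^{(k)}=\Ocal(\kappa_0)$ from \cite{wu2024quantum} together with the perturbation/rounding argument for the final precision. Your write-up is in fact more explicit than the paper's (which states the composition tersely), and the uniform condition-number bound you flag as the main obstacle is exactly the point the paper settles by citation at the end of Section~\ref{sec: IR}.
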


\begin{proof}
    The number of IR iterations is bounded by $\Ocal(L)$ based on Theorem~\ref{theo: IR iter}. At each iteration, we have $\Ocal(n^2)$ classical arithmetic operations due to a classical matrix-vector product and the cost of AE-QIPM to solve the refining problem. Additionally, to address $\|A\|_F$ in the complexity, one can initially normalize data by $\|A\|_F$, and consequently, the final precision should be increased by $\|A\|_F$, which appears in polylog. The quantum complexity is $\tilde{\Ocal}_{n,L, \|A\|_F}(n^{1.5}\kappa_0L)$ queries to QRAM, and $\Ocal(nL)$ arithmetic operations at each step of AE-QIPM. Thus, the total number of queries to QRAM is 
    $$\tilde{\Ocal}_{\kappa_0, n}(n^{1.5}L\kappa_0)),$$ 
    and the total number of classical arithmetic operations is bounded by 
    $\mathcal{O}(n^2L).$
\end{proof}

Table~\ref{tab:compelxities} compares the complexity of the proposed IR-AE-QIPM with other classical and quantum IPMs. As we can see, the total complexity of our approaches outperforms previous complexities. In the last line of the table, we show the complexity of the classical counterpart of the IR-AE-IPM using CG to solve the system. As we can see, the total complexity can not be better than $n^{2.5}$ in the classical version. This exhibits a clear quantum advantage compared to other algorithms in the literature. It should be mentioned that the quantum complexity of all QIPMs is the number of queries to QRAM. Without QRAM assumptions, some overhead cost may appear in complexities, although the quantum central path method of \cite{augustino2023central} is QRAM-free.
\renewcommand{\arraystretch}{1.5}
\begin{table*}[ht]
\caption{Worst-case Complexity of different IPMs for LO $(m=n)$}
\label{tab:compelxities}
\centering
\resizebox{\textwidth}{!}{%
\begin{tabular}{cccc}
\hline
Algorithm                & Linear System Solver & Quantum Complexity & Classical Complexity  \\ \hline\hline
IPM with Partial Updates \cite{roos2005interior}            &            Low rank updates          &                    &        $\Ocal(n^{3}L)$                       \\ \hline
Feasible IPM \cite{roos2005interior}            &  Cholesky             &                    &                $\Ocal(n^{3.5}L)$               \\ \hline
II-IPM   \cite{Monteiro2003_Convergence}                          & PCGM                   &                    &                $\Ocal(n^{5}L\bar{\chi}^2)$        \\ \hline
Robust IPM \cite{brand2020}                        & Fast Mat-Mul and Partial Update                  &                    &                $\Ocal(n^{w}L)$              \\
\hline\hline
Quantum Central Path \cite{augustino2023central}
 &          Hamiltonian Evolution      &          $\tilde{\Ocal}(n^{3.5}\frac{\omega}{\epsilon})$          &                              \\ \hline

IR-IF-IPM \cite{mohammadisiahroudi2025improvements}                      & PCGM                  &                    &    $\tilde{\Ocal}_{\mu^0}(n^{3.5}L\bar{\chi}^2)$     \\ \hline
IR-IF-QIPM \cite{mohammadisiahroudi2023inexact}                      & QLSA+QTA             &             $\tilde{\Ocal}_{n,\kappa_{A}, \|A\|,\|b\|,\mu^0}(n^{1.5}L\kappa_{A}^2\omega^5)$       &          $\tilde{\Ocal}_{\mu^0}(n^{2.5}L)$                  \\ \hline
 IR-IF-QIPM \cite{mohammadisiahroudi2025improvements}                      & Precond+QLSA+QTA             &      $ \widetilde{\Ocal}_{ n, \left\| A \right\|_F, \frac{1}{\epsilon}} ( n^{1.5} L\bar{\chi}^2      ) $              &            $\tilde{\Ocal}_{\mu^0}(n^{2.5}L)$                 \\ \hline
 IPM with approximate Newton steps~\cite{apers2023quantum}     & Q-spectral Approx.     & $\tilde{\mathcal{O}}_{n, \frac{1}{\zeta}}(n^{5.5})$                   & $\tilde{\mathcal{O}}_{\frac{1}{\zeta}}(n^{1.5} )$  \\ \hline
Quantum Dual-log Barrier \cite{wu2024quantum}                                       & QLSA+QTA               & $\widetilde{\mathcal{O}}_{n, \kappa_0, \mu^0,\|A\|_F}\left(n^{1.5} \kappa_0 L \right)$ & $\mathcal{O}( {n}^{2.5}L)$ \\ \hline
 Proposed IR-AE-QIPM                       & IQLSA+Quant Mat-Vec           &             $\tilde{\Ocal}_{n,\kappa_{0},\|A\|_F}(n^{1.5}L\kappa_{0})$       &          ${\Ocal}(n^{2}L)$                    \\ \hline
 Classical IR-AE-IPM                       & CGM             &                 &            ${\Ocal}(n^{2.5}L\kappa_0)$                  \\ \hline
\end{tabular}%
}
\vspace{3pt}

\begin{minipage}{\textwidth}
{\footnotesize \textit{Note.} Quantum complexity is expressed in the form of query complexity. Here, $\omega$ is the upper bound on the norm of the optimal solution, and $\bar{\chi}^2$ is an upper bound on the condition number of the preconditioned NES.  $w$ is the matrix multiplication parameter. PCGM stands for Preconditioned Conjugate Gradient Method.}
\end{minipage}
\end{table*}
\section{Applications in AI and Machine Learning}\label{sec: app}
The integration of QLSAs and QIPMs has shown the potential to accelerate core problems in machine learning. This section highlights some key applications that can benefit from these quantum techniques, as demonstrated in recent studies including \cite{wu2023inexact,mohammadisiahroudi2022regression}.

Quantum-enhanced regression is one of the most direct applications of QLSAs in machine learning. Ordinary Least Squares (OLS), Weighted Least Squares (WLS), and Generalized Least Squares (GLS) problems can all be reduced to solving linear systems of the form $(X^TX)\beta = X^Ty$, which QLSAs can handle efficiently. When paired with quantum tomography algorithms (QTAs), these solvers can retrieve classical solutions for model training and inference. The incorporation of iterative refinement techniques further enables exponential speedups with respect to precision, overcoming the classical bottleneck caused by ill-conditioning. Specifically, QLSAs offer:
\begin{itemize}
    \item Exponential speedup w.r.t. dimension in state preparation.
    \item Exponential speedup w.r.t. precision via iterative refinement.
    \item Milder dependence on condition number through adaptive regularization.
    
\end{itemize}

Many sophisticated machine learning models, such as Support Vector Machines (SVMs) and Lasso Regression, can be formulated as Linearly Constrained Quadratic Optimization (LCQO) problems. These problems are ideal candidates for QIPMs, which leverage QLSAs to solve Newton systems arising in Interior Point Methods. Wu et al. \cite{wu2023inexact} proposed an Inexact Feasible QIPM (IF-QIPM) that preserves the feasibility of iterates using orthogonal subspace systems (OSS), enabling the solution of LCQO problems, including:
\begin{itemize}
    \item Lasso Regression: Promotes sparse solutions using $\ell_1$ regularization. Reformulated as an LCQO problem, it can be solved with improved complexity using IF-QIPMs.
    \item Soft-Margin Support Vector Machines: Reformulated as LCQO using variable splitting and slack variables. QIPMs achieve better complexity in high-dimensional regimes.
\end{itemize}

These quantum-enhanced formulations demonstrate:
\begin{itemize}
    \item Polynomial speedup w.r.t. dimension $n$ over classical IPMs.
    \item Exponential speedup w.r.t. precision and condition number over previous QIPMs.
    \item Improved feasibility guarantees through OSS-based feasibility maintenance.
\end{itemize}

Together, these applications showcase the growing relevance of QLSAs and QIPMs in machine learning, particularly as hardware capabilities advance. The combination of quantum speedups in dimension, precision, and matrix conditioning illustrates a compelling path forward for quantum-enhanced data science.
\section{Conclusions}\label{sec: con}

In this paper, we presented recent advances in the development of Quantum Interior Point Methods (QIPMs) for Linear Optimization. By integrating iterative refinement and preconditioning techniques, we tackled two major challenges inherent in QLSA-based QIPMs: the inexactness of quantum solvers and their sensitivity to the condition number of the Newton system. We further introduced a novel Almost-Exact Interior Point Method framework, in which all matrix-vector operations and Newton system computations are performed on a quantum computer. This approach delivers a provable quantum speedup over classical IPMs.

To achieve an exponentially small error in the computed Newton steps, we embed iterative refinement both internally within the quantum solver and externally across IPM iterations. As a result, the overall algorithm attains an optimal worst-case complexity of $\mathcal{O}(n^2 L)$ for fully dense linear optimization problems.

A key limitation of the proposed method is its dependence on Quantum Random Access Memory (QRAM), the physical implementation of which remains an open challenge. However, alternative approaches can be explored to mitigate this dependency. For example, circuit-based QRAM constructions \cite{park2019circuit}, recent developments in Quantum Singular Value Transformation (QSVT) without block encoding \cite{chakraborty2025quantum}, or sparse-access input models offer promising directions for developing QRAM-free variants of the algorithm. Moreover, a detailed resource estimation study, such as the framework in \cite{tu2025towards}, is essential for evaluating the real-world feasibility and quantum advantage of the proposed method.

Future research could also focus on extending this framework to a primal-dual Almost-Exact QIPM applicable to both linear and semidefinite optimization problems. Primal-dual methods, particularly those based on self-dual embedding formulations, offer the advantage of not requiring an initial strictly feasible interior point, thus expanding the applicability of QIPMs in practice.
\begin{credits}
\subsubsection{\ackname} This work is supported by Defense Advanced Research Projects Agency as part of the project W911NF2010022: {\em The Quantum Computing Revolution and Optimization: Challenges and Opportunities}.

\subsubsection{\discintname}
The authors have no competing interests to declare that are
relevant to the content of this article.
\end{credits}

{\hyphenpenalty=100000
\bibliographystyle{IEEEtranS}
\bibliography{references}}
\end{document}